\documentclass[a4paper, 11pt]{article}

\usepackage{amsmath}
\usepackage{amssymb}
\usepackage{amsthm}
\usepackage{amsfonts}
\usepackage{mathtools}
\usepackage{ascmac}
\usepackage{spalign}
\usepackage{fancybox} 
\usepackage[mathscr]{eucal}
\usepackage{footnpag}
\usepackage{siunitx}
\usepackage{multicol}
\usepackage{bbm}
\usepackage{framed}
\usepackage{here}
\usepackage{braket}

\theoremstyle{definition}
\newtheorem{theorem}{Theorem}[section]
\newtheorem*{theorem*}{Theorem}
\newtheorem{definition}{Definition}[section]
\newtheorem*{definition*}{Definition}

\newtheorem*{proposition*}{Proposition}
\newtheorem{lemma}{Lemma}[section]
\newtheorem*{lemma*}{Lemma}
\newtheorem{corollary}{Corollary}[section]
\newtheorem*{corollary*}{Corollary}

\newtheorem*{remark*}{Remark}

\title{Hadamard Rigidity of Positive Sojourn Time Distributions for Rotation Coins}
\author{
Shunya Tamura\thanks{
Corresponding author.
Okegawa City Okegawa West Junior High School,
Saitama, 363-0027, Japan,
e-mail: shunya.tamura059@gmail.com
}
\and
Tomoki Yamagami\thanks
{Department of Information and Computer Sciences, 
Saitama University, Saitama, 338-8570, Japan, 
e-mail: tyamagami@mail.saitama-u.ac.jp
}
}

\date{}

\begin{document}
\maketitle
\begin{abstract}
We study the distribution of the positive sojourn time
for a one-dimensional two state quantum walk,
conditioned on return to the origin.
Konno showed that, for the Hadamard walk,
this conditional distribution is exactly uniform
at times divisible by $4$.
In this paper, we investigate whether this finite time exact uniformity
characterizes the Hadamard coin within the family of rotation coins.

For a fixed initial state, we prove that the following three conditions
are equivalent for rotation coins:
the conditional distribution is exactly uniform at time $8$;
the conditional distribution is exactly uniform at every time $4m$
with $m\ge2$;
and the coin is the Hadamard coin.
Thus, the uniformity phenomenon found by Konno
is characterized as a rigidity phenomenon of the Hadamard coin
within the rotation coin family.

The proof uses a matrix-valued generating function
for paths returning to the origin.
We analyze the algebraic structure arising from
an absorbing process on the half line.
Finally, by comparing low degree coefficients at time $8$,
we show that exact uniformity forces the rotation coin
to be the Hadamard coin.
\end{abstract}

\noindent
{\bf Keywords:} quantum walk, positive sojourn time, generating function, uniform distribution, Hadamard coin. 

\noindent
{\bf 2020 Mathematics Subject Classification:} 81P68, 60J10, 05A15, 05C81.

\section{Introduction}

A one-dimensional quantum walk with two internal states
is a fundamental model at the intersection of
quantum information, probability theory, and combinatorics \cite{Ambainis2001, Konno2002, Kempe2003, VenegasAndraca2012}.
It exhibits distributional structures which are essentially different
from those of classical random walks.
In a classical random walk, the probability distribution is obtained by summing the probabilities of all paths leading to each position.
For a quantum walk, in contrast,
complex amplitudes are superposed first,
and the probability of finding the walker at a position is then obtained by taking the squared norm of the resulting amplitude.
This quantum interference gives rise to characteristic features including ballistic spreading and distinctive finite-time phenomena.
Basic properties of one-dimensional quantum walks
with two internal states have been studied in many works
after Ambainis et al. \cite{Ambainis2001}, including
limit theorems for the probability distribution of one-dimensional quantum walks \cite{Konno2002, Konno2005}.

Quantum walks also have interesting structures
in their distributions at finite times,
in addition to their limit behavior as time tends to infinity.
One such phenomenon is the exact uniform distribution
of the sojourn time in the positive half-line
for the one-dimensional quantum walk with the Hadamard coin,
conditioned on return to the origin,
as established by Konno \cite{Konno2012}.
Here the sojourn time in the positive half-line is defined
according to Konno's convention:
a time interval at the origin is assigned to the positive
or negative side according to the next direction of motion.
Under this convention, in the case of the Hadamard coin,
the distribution of the sojourn time in the positive half-line,
conditioned on return to the origin,
has a very symmetric form.

In subsequent work, the sojourn time for one-dimensional quantum walks
with general coin operators has also been studied
\cite{Cai2023}.
These studies treat distributions and computational methods
for the sojourn time under general coins,
and they extend the study of sojourn times in quantum walks.
The problem considered in the present paper is different.
Our purpose is not to determine the sojourn time distribution
for a general coin.
Rather, we investigate whether the exact uniform distribution
at finite times, which appears in Konno's Hadamard walk,
is rigid enough to characterize the coin matrix.

In this paper, we consider the rotation coin
\[
U(\theta)=
\begin{bmatrix}
\cos\theta & \sin\theta\\
\sin\theta & -\cos\theta
\end{bmatrix}
\qquad
\left(0<\theta<\frac{\pi}{2}\right)
\]
and the initial state
\[
\varphi_\ast=
\frac{1}{\sqrt2}
\begin{bmatrix}
1\\ i
\end{bmatrix}.
\]
We analyze the distribution of the sojourn time
in the positive half-line, conditioned on return to the origin.
Our main result shows that, in this family of rotation coins,
Konno's exact uniform distribution occurs only for the Hadamard coin.
More precisely, we prove that the following three conditions
are equivalent:
the distribution of the sojourn time in the positive half-line
conditioned on return to the origin is exactly uniform at time $8$;
the same distribution is exactly uniform at time $4m$
for every $m\ge2$;
and
\[
\theta=\frac{\pi}{4}.
\]
Thus, the uniformity at time $8$ already forces the coin
to be the Hadamard coin.
In this sense, the uniform distribution phenomenon at finite times
discovered by Konno is characterized as a rigidity phenomenon
specific to the Hadamard coin in the family of rotation coins.

In the proof, we introduce a generating function with matrix entries
for paths returning to the origin,
and study its algebraic structure.
In the case of rotation coins, the absorbing process on the half line
leads to the square root
\[
\sqrt{z^4t^4+2\cos(2\theta)z^2t^2+1}.
\]
When $\theta=\pi/4$, which corresponds to the Hadamard coin, we have
\[
\cos(2\theta)=0,
\]
and this square root becomes simpler.
However, the necessity of the exact uniform distribution
does not follow only from the form of this square root.
Therefore, in this paper, we compare coefficients of low degree
at time $8$ and show that the exact uniformity forces
\[
\cos(2\theta)=0.
\]

The remainder of this paper is organized as follows.
In Section \ref{sec:definition},
we define one-dimensional quantum walks with two internal states
and the sojourn time in the positive half-line.
In Section \ref{sec:uniform},
we introduce a generating function with matrix entries
for paths returning to the origin,
and prove the characterization theorem for the exact uniform distribution
under rotation coins.
In Section \ref{sec:numerical},
we give numerical examples for small times
and confirm that the uniform distribution appears only
for the Hadamard coin.
Finally, we give the conclusion and some future problems.

\section{Definition of the quantum walk and the positive sojourn time}
\label{sec:definition}

In this section, we give the definition of a one-dimensional discrete-time quantum walk
with two internal states,
and define the positive sojourn time used in this paper.
In particular, following the definition introduced by Konno \cite{Konno2012},
we adopt the convention that an interval at the origin is assigned
to the positive side or to the negative side according to the next direction
of motion.

We first define a one-dimensional discrete-time quantum walk with two internal states.
Let $\mathbb Z$ be the set of all integers.
The quantum walker has two chiralities,
the left chirality $\ket{L}$ and the right chirality $\ket{R}$,
which are represented by
\[
\ket{L}=
\begin{bmatrix}
1\\0
\end{bmatrix},
\qquad
\ket{R}=
\begin{bmatrix}
0\\1
\end{bmatrix}.
\]
The time evolution on the coin space is given by a $2\times2$ unitary matrix
\[
U=
\begin{bmatrix}
a & b\\
c & d
\end{bmatrix}
\qquad
(a,b,c,d\in\mathbb C).
\]
One step of the time evolution of the quantum walker is defined as follows.
First, the coin matrix $U$ acts on the coin state.
Then the walker moves to the left or to the right according to its chirality.
For this purpose, we decompose $U$ as
\[
P=
\begin{bmatrix}
a & b\\
0 & 0
\end{bmatrix},
\qquad
Q=
\begin{bmatrix}
0 & 0\\
c & d
\end{bmatrix}.
\]
In this paper, $P$ corresponds to a move to the left,
and $Q$ corresponds to a move to the right.
We also use the convention that matrix products act from right to left.
Thus, for example, $QP$ represents a path which first moves to the left
and then moves to the right.

For time $n = 1,\,2,\,\dots$, let $\Xi_n(\ell,m)$ denote the sum of the matrix products
corresponding to all paths which move $\ell$ times to the left
and $m$ times to the right.
Here $n=\ell+m$.
For example,
\begin{align*}
\Xi_2(1,1)&=QP+PQ,\\
\Xi_4(2,2)&=Q^2P^2+P^2Q^2+QPQP+PQPQ+PQ^2P+QP^2Q
\end{align*}
hold.

Let $\varphi_\ast\in\mathbb C^2$ be the initial coin state,
and assume that
\[
\|\varphi_\ast\|=1.
\]
When the walker starts from the origin,
the probability that the walker is at position $x\in\mathbb Z$ at time $n = 1,\,2,\,\dots$ is given by
\[
\mathbb P(X_n=x)
=
\left\|
\Xi_n\left(\frac{n-x}{2},\frac{n+x}{2}\right)
\varphi_\ast
\right\|^2
\]
if $n+x$ is even.
Here
\[
\ell=\frac{n-x}{2},
\qquad
m=\frac{n+x}{2},
\]
and hence $x=-\ell+m$.
On the other hand, if $n+x$ is odd, then
\[
\mathbb P(X_n=x)=0.
\]

We next define the positive sojourn time used in this paper.
This positive sojourn time does not simply count
the number of times at which the position is positive.
Let $X_j$ be the position of the walker at time $j$.
We define the walker to be on the positive side during the time interval $[j,\,j+1)$ if the following condition holds:
\[
X_j>0,
\qquad\text{or}\qquad
X_j=0\ \text{and}\ X_{j+1}=1.
\]
In all other cases, the walker is regarded as being on the negative side during $[j,\,j+1)$.
Note that, if $X_j=0$ and the walker next moves to the left,
then the interval $[j,\,j+1)$ is assigned to the negative side.
Under this convention, the positive sojourn time at time $n$
is defined as the number of time intervals in $[0,\,n)$
during which the walker is on the positive side.

Let $\Psi_n^x(k)$ be the sum of the matrix products
corresponding to all paths which start from the origin,
arrive at position $x$ at time $n$,
and have positive sojourn time $k$.
In particular, for paths returning to the origin, we write
\[
\Gamma_n(k):=\Psi_n^0(k).
\]
For example, the following identities hold:
\begin{align*}
\Gamma_2(0)&=QP, & \Gamma_2(1)&=O, & \Gamma_2(2)&=PQ,\\
\Gamma_4(0)&=Q^2P^2+QPQP, & \Gamma_4(1)&=O,
&\Gamma_4(2)&=QP^2Q+PQ^2P, \\
\Gamma_4(3)&=O, & \Gamma_4(4)&=P^2Q^2+PQPQ.
\end{align*}
Here $O$ is the $2\times2$ zero matrix.

For the initial state $\varphi_\ast$, define
\[
w_n(k):=\|\Gamma_n(k)\varphi_\ast\|^2
\qquad
(0\le k\le n).
\]
The quantity $w_n(k)$ is the unnormalized quantum probability weight
of paths that return to the origin at time $n$
and have positive sojourn time $k$.
Therefore, the distribution of the positive sojourn time
conditioned on return to the origin is given by
\[
\mu_n(k)
:=
\frac{w_n(k)}
{\sum_{j=0}^{n}w_n(j)}
=
\frac{\|\Gamma_n(k)\varphi_\ast\|^2}
{\sum_{j=0}^{n}\|\Gamma_n(j)\varphi_\ast\|^2}
\qquad
(0\le k\le n).
\]
Here we restrict attention to times $n$ for which the denominator is positive.
If the denominator is zero, then the conditional distribution $\mu_n(k)$
is not defined.

By adopting this convention,
the positive sojourn time distribution with return to the origin
exhibits a natural symmetry.
It is also compatible with the simplification of the generating function
and with the exact uniform distribution at finite times shown later.
If one uses instead the definition which simply counts the times
at which $X_j>0$,
then the exact uniform distribution at finite times studied in this paper
does not hold in general.

\section{Simplification of generating functions and exact uniform distributions}
\label{sec:uniform}

In this section, we analyze the distribution of the positive sojourn time
conditioned on return to the origin
from the viewpoint of generating functions.
The purpose of this section is to understand the phenomenon that, at time $n=4m$,
\[
\mu_{4m}(2),\mu_{4m}(4),\ldots,\mu_{4m}(4m-2)
\]
are exactly uniform from the algebraic properties of the coin matrix.
In particular, in order to show that this uniform distribution
is specific to the Hadamard coin,
we first study the structure of the generating function
with matrix entries for paths returning to the origin.

Here it is important to distinguish
the matrix-valued generating function $\Gamma(z,t)$
from the generating function of probability weights $w_n(k)$.
We define the generating function with matrix entries by
\[
\Gamma(z,t)
:=
\sum_{n\ge0}\sum_{k=0}^{n}\Gamma_n(k)z^nt^k.
\]
On the other hand, the generating function of probability weights is
\[
W(z,t)
:=
\sum_{n\ge0}\sum_{k=0}^{n}w_n(k)z^nt^k.
\]
It should be noted that $W(z,t)$ is not simply a matrix entry of $\Gamma(z,t)$.
Indeed,
\[
w_n(k)=\|\Gamma_n(k)\varphi_\ast\|^2
=
\varphi_\ast^\ast \Gamma_n(k)^\ast\Gamma_n(k)\varphi_\ast,
\]
and it is defined by taking the squared norm of each coefficient
$\Gamma_n(k)$.
Therefore, in this section, we first clarify the algebraic structure
of $\Gamma(z,t)$.
Then we consider the conditional distribution $\mu_n(k)$
through the squared norm of each coefficient.

In this paper, we first treat the essential case of the rotation coin
\[
U(\theta)=
\begin{bmatrix}
\cos\theta & \sin\theta\\
\sin\theta & -\cos\theta
\end{bmatrix}
\qquad
\left(0<\theta<\frac{\pi}{2}\right).
\]
We decompose this coin as
\[
U(\theta)=P+Q,
\]
where
\[
P=
\begin{bmatrix}
\cos\theta & \sin\theta\\
0 & 0
\end{bmatrix},
\qquad
Q=
\begin{bmatrix}
0 & 0\\
\sin\theta & -\cos\theta
\end{bmatrix}.
\]
Here $P$ corresponds to a move to the left,
and $Q$ corresponds to a move to the right.

We first record a simple invariance which will be used later.

\begin{lemma}
\label{lem:global-phase}
If the coin matrix $U$ is replaced by
\[
U'=e^{i\eta}U
\qquad (\eta\in\mathbb R),
\]
then the weights $w_n(k)$ before normalization
and the conditional distribution $\mu_n(k)$ do not change.
\end{lemma}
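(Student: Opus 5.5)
The plan is to track how the decomposition $U=P+Q$ transforms under $U\mapsto U'=e^{i\eta}U$, and then read off the effect on each $\Gamma_n(k)$. Since $P$ and $Q$ are obtained from $U$ by keeping the first row and the second row respectively, the decomposition of $U'$ is
\[
P'=e^{i\eta}P,\qquad Q'=e^{i\eta}Q .
\]

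Next I would look at a single path of length $n$. It contributes a product of exactly $n$ factors, each equal to $P$ or $Q$. Replacing every factor by its primed counterpart therefore multiplies the product by $e^{in\eta}$. This scalar is the same for every path of length $n$, whatever its endpoint or sojourn time.

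Recall that $\Gamma_n(k)=\Psi_n^0(k)$ is the sum of these products over all paths that start at the origin, return to it at time $n$, and have positive sojourn time $k$. The set of such paths is determined only by the sequence of left/right moves, not by the coin. So the index set of the sum is unchanged, and each term is multiplied by the common scalar. Hence
\[
\Gamma_n'(k)=e^{in\eta}\,\Gamma_n(k).
\]

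It then follows that
\[
w_n'(k)=\|e^{in\eta}\Gamma_n(k)\varphi_\ast\|^2=|e^{in\eta}|^2\,\|\Gamma_n(k)\varphi_\ast\|^2=w_n(k).
\]
Consequently the denominator $\sum_j w_n(j)$ is also unchanged. In particular, $\mu_n$ is defined for $U'$ exactly when it is defined for $U$, and when it is defined we get $\mu_n'(k)=\mu_n(k)$.

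There is no real obstacle here. The only point that needs to be stated explicitly is that the sojourn-time classification of a path uses only its sequence of moves, so it is independent of the coin. This is what makes the phase factor uniform across all terms of $\Gamma_n(k)$.
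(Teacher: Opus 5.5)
Your proposal is correct and follows essentially the same argument as the paper: $P'=e^{i\eta}P$ and $Q'=e^{i\eta}Q$, so every length-$n$ path product picks up the common factor $e^{in\eta}$. Hence $\Gamma_n'(k)=e^{in\eta}\Gamma_n(k)$, and the squared norms $w_n(k)$, and therefore $\mu_n(k)$, are unchanged. You also state two points the paper leaves implicit: the sojourn-time classification depends only on the sequence of moves, and $\mu_n$ is defined for $U'$ exactly when it is defined for $U$.
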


\begin{proof}
If the coin matrix is replaced by $U'=e^{i\eta}U$,
then the corresponding matrices become
\[
P'=e^{i\eta}P,
\qquad
Q'=e^{i\eta}Q.
\]
Let $A(w)$ be the matrix product corresponding to a path $w$ of length $n$.
Then the transformed matrix product is
\[
A'(w)=e^{in\eta}A(w).
\]
Therefore, the sum over paths which return to the origin
and have positive sojourn time $k$ becomes
\[
\Gamma_n'(k)=e^{in\eta}\Gamma_n(k).
\]
Hence
\[
\|\Gamma_n'(k)\varphi_\ast\|^2
=
\|e^{in\eta}\Gamma_n(k)\varphi_\ast\|^2
=
\|\Gamma_n(k)\varphi_\ast\|^2.
\]
Thus $w_n(k)$ does not change.
Consequently, its normalization $\mu_n(k)$ also does not change.
\end{proof}

\begin{lemma}
\label{lem:left-diagonal}
If the coin matrix $U$ is replaced by
\[
U'=D_1U,
\qquad
D_1=
\begin{bmatrix}
e^{i\alpha} & 0\\
0 & e^{i\beta}
\end{bmatrix},
\]
then the weights $w_n(k)$ before normalization for paths returning to the origin
and the conditional distribution $\mu_n(k)$ do not change.
\end{lemma}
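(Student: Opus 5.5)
With $U'=D_1U$ we have $P'=D_1P$ and $Q'=D_1Q$. Since $P$ has nonzero entries only in its first row and $Q$ only in its second row, multiplying by $D_1$ scales these rows by different phases:
\[
P'=e^{i\alpha}P,\qquad Q'=e^{i\beta}Q .
\]
The plan is to follow the proof of Lemma \ref{lem:global-phase}, tracking these two phases separately along each path.

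Take a path $w$ of length $n$ with $\ell$ left moves and $m$ right moves, and let $A(w)$ be its matrix product as before. Each factor $P$ picks up $e^{i\alpha}$ and each factor $Q$ picks up $e^{i\beta}$, and scalars commute with the matrices, so
\[
A'(w)=e^{i(\ell\alpha+m\beta)}A(w).
\]
For a path returning to the origin we have $-\ell+m=0$, so $\ell=m=n/2$. The phase $e^{i\frac n2(\alpha+\beta)}$ is therefore the same for every path that contributes to $\Gamma_n(k)$, whatever its positive sojourn time. Summing over these paths gives
\[
\Gamma_n'(k)=e^{i\frac n2(\alpha+\beta)}\Gamma_n(k).
\]
The moduli then agree: $\|\Gamma_n'(k)\varphi_\ast\|^2=\|\Gamma_n(k)\varphi_\ast\|^2$, so $w_n(k)$ is unchanged. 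The denominator of $\mu_n(k)$ is a sum of these same weights, so it is unchanged too, and hence so is $\mu_n(k)$.

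For odd $n$ both sides are zero, since no path returns to the origin. The only point needing care is that the accumulated phase depends on $\ell$ and $m$ separately, so it is uniform only because the paths are restricted to those returning to the origin. This is why the statement is made only for the return weights; no other obstacle is expected.
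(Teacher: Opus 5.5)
Your proposal is correct and matches the paper's proof essentially step for step. Both arguments use $P'=e^{i\alpha}P$ and $Q'=e^{i\beta}Q$, then $\ell=m=n/2$ for return paths, to get the common phase $e^{i(\alpha+\beta)n/2}$ on $\Gamma_n(k)$, so $w_n(k)$ and $\mu_n(k)$ are unchanged; your row-structure justification of $D_1P=e^{i\alpha}P$ just makes explicit what the paper asserts without comment.
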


\begin{proof}
Under this transformation, we have
\[
P'=e^{i\alpha}P,
\qquad
Q'=e^{i\beta}Q.
\]
For a path of length $n$ which returns to the origin,
the numbers of moves to the left and to the right are equal,
and each of them is $n/2$.
Therefore, for any path $w$ returning to the origin,
the corresponding matrix product is transformed as
\[
A'(w)
=
e^{i\alpha n/2}e^{i\beta n/2}A(w)
=
e^{i(\alpha+\beta)n/2}A(w).
\]
This phase factor is common to all paths returning to the origin
at the same time $n$.
Therefore,
\[
\Gamma_n'(k)
=
e^{i(\alpha+\beta)n/2}\Gamma_n(k),
\]
and hence
\[
\|\Gamma_n'(k)\varphi_\ast\|^2
=
\|\Gamma_n(k)\varphi_\ast\|^2.
\]
Thus $w_n(k)$ and $\mu_n(k)$ do not change.
\end{proof}

The above lemmas show that the global phase
and a diagonal unitary transformation from the left
do not have an essential effect on the positive sojourn time distribution
conditioned on return to the origin.
In this paper, a coin which is transformed into the Hadamard coin
by a global phase and a diagonal unitary transformation from the left
is called a coin of Hadamard type in this restricted sense.
However, in the main theorem below,
we first characterize the condition for exact uniformity
in the family of rotation coins $U(\theta)$.

The next lemma gives the structure of the generating function
which plays the central role in this section.
The important point is that the generating function with matrix entries
for paths returning to the origin is described,
through an absorbing process on the half line,
by the characteristic roots of a quadratic equation.

\begin{lemma}
\label{lem:matrix-gf}
Under the rotation coin $U(\theta)$,
the matrix-valued generating function $\Gamma(z,t)$
for paths returning to the origin is algebraic in $z$.
More concretely, its entries can be expressed by using the two square roots
\[
\sqrt{
z^4t^4+2\cos(2\theta)z^2t^2+1
}
\]
and
\[
\sqrt{
z^4+2\cos(2\theta)z^2+1
}.
\]
\end{lemma}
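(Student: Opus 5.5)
We decompose paths returning to the origin at their visits to $0$ and reduce everything to first-return excursions on the half lines. A path of length $n$ from $0$ to $0$ splits uniquely at its intermediate visits to the origin into a sequence of excursions. Each excursion either stays strictly positive between its endpoints, i.e.\ it starts with a right move $Q$ and ends with a left move $P$, or stays strictly negative, i.e.\ it starts with $P$ and ends with $Q$. Under Konno's convention every time interval of a positive excursion, including the first one at the origin, is counted as positive, and every interval of a negative excursion is counted as negative. Hence a positive excursion of length $2j$ contributes $z^{2j}t^{2j}$ and a negative one contributes $z^{2j}$.

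We introduce the matrix first-return generating functions
\[
F_+(z)=\sum_{j\ge1}F_+^{(2j)}z^{2j},\qquad F_-(z)=\sum_{j\ge1}F_-^{(2j)}z^{2j},
\]
where $F_\pm^{(2j)}$ is the sum of the matrix products over positive and negative excursions of length $2j$, respectively. Since matrix products compose right to left, the decomposition gives
\[
\Gamma(z,t)=\bigl(I-F_+(zt)-F_-(z)\bigr)^{-1}
\]
as formal power series. It therefore suffices to show that the entries of $F_\pm(z)$ are algebraic and involve only $\sqrt{z^4+2\cos(2\theta)z^2+1}$. Substituting $zt$ for $z$ in $F_+$ then produces the second radical, and the inverse of a $2\times2$ matrix keeps us in the field generated by these two square roots over $\mathbb C(z,t)$.

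To compute $F_+$ we write a positive excursion as $P\,G\,Q$. Here $G$ is the generating matrix of paths from $1$ to $1$ that never drop below $1$. Equivalently $G=(I-z^2 P\,G\,Q)^{-1}$ by the same first-return decomposition one level up, so that $F_+=z^2P G Q$ and $G=(I-F_+)^{-1}$. Because $P$ and $Q$ have rank one, $PGQ=g\,|p\rangle\langle q|$ with a scalar $g$. Here $|p\rangle=\ket{L}$, $\langle q|$ is the row $(\sin\theta,-\cos\theta)$, and $g$ is an explicit entry.

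This reduces the matrix equation to a scalar quadratic in $f=z^2\langle \text{row of }P\rangle G\,|\text{column of }Q\rangle$. The quadratic has the form $\sin^2\theta\, f^2\cdot(\ldots)$, and more precisely
\[
f=z^2\,\frac{\kappa}{1-f\lambda}
\]
for constants $\kappa,\lambda$ built from $\cos\theta$ and $\sin\theta$. Its discriminant, once we clear denominators and use $\cos^2\theta-\sin^2\theta=\cos 2\theta$, should be exactly $z^4+2\cos(2\theta)z^2+1$. We select the root that vanishes at $z=0$.

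The negative side is handled symmetrically with $P$ and $Q$ interchanged, and it gives the same discriminant.

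The main obstacle is the bookkeeping in the rank-one reduction. A naive scalar Catalan equation produces $\sqrt{1-4\sin^2\theta\cos^2\theta z^2}$ or a similar expression. The correct discriminant only appears after we track the interference term $\langle q|p\rangle$ and the off-diagonal entries $\langle q|P\ldots$ properly. I would first carry out the computation for the corresponding transfer, namely $z^2\langle q|(I-\cdot)^{-1}|p\rangle$ with the $-\cos\theta$ sign, and check it against $\Gamma_2$ and $\Gamma_4$ listed above. After that, algebraicity in $z$ is immediate, because every entry of $\Gamma(z,t)$ is a rational function of $z$, $t$ and the two square roots.
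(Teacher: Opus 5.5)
\textbf{Verdict: the skeleton is sound, but the central computation is missing and the scalar equation you state is wrong.} The skeleton is a genuinely different route from the paper's. You use the first-passage self-consistency $F_+=z^2P(I-F_+)^{-1}Q$ together with the rank-one structure of $P$ and $Q$. The paper instead sets up absorption generating functions $F_x$ from level $x$, makes the ansatz $F_x=r^xV$, and reads off the quadratic $(zt\cos\theta)r^2+(1-z^2t^2)r-zt\cos\theta=0$ from $\det(rI-ztP-ztr^2Q)=0$.

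The step that carries the content of the lemma is that your scalar equation has discriminant $z^4+2\cos(2\theta)z^2+1$. You only assert this (``should be exactly''). An equation $f=z^2\kappa/(1-f\lambda)$ with constants $\kappa,\lambda$ is $\lambda f^2-f+\kappa z^2=0$. Its discriminant $1-4\kappa\lambda z^2$ has no $z^4$ term, so it is exactly the naive Catalan radical you warn against. As written, the argument does not reach the stated square root.

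The fix is short. Write $P=\ket{L}a$ and $Q=\ket{R}q$ with rows $a=(\cos\theta,\sin\theta)$ and $q=(\sin\theta,-\cos\theta)$. Then $F_+=f\,\ket{L}q$ with $f=z^2\,aG\ket{R}$. Since $q\ket{L}=\sin\theta$, you get $G=(I-F_+)^{-1}=I+\frac{f}{1-f\sin\theta}\ket{L}q$. Using $a\ket{R}=\sin\theta$ and $a\ket{L}\,q\ket{R}=-\cos^2\theta$,
\[
f=z^2\Bigl(\sin\theta-\frac{f\cos^2\theta}{1-f\sin\theta}\Bigr)=z^2\,\frac{\sin\theta-f}{1-f\sin\theta},
\qquad\text{i.e.}\qquad
\sin\theta\,f^2-(1+z^2)f+z^2\sin\theta=0 .
\]
The numerator depends on $f$ through the interference term $-\cos^2\theta$. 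This is what produces the discriminant $(1+z^2)^2-4z^2\sin^2\theta=z^4+2\cos(2\theta)z^2+1$.

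The root vanishing at $z=0$ is
\[
f=\frac{1+z^2-\sqrt{z^4+2\cos(2\theta)z^2+1}}{2\sin\theta}=z^2\sin\theta-z^4\sin\theta\cos^2\theta+\cdots,
\]
which agrees with $PQ$ and $P^2Q^2$. The negative side, with $F_-=h\ket{R}a$, gives the same equation for $h$.

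With this in place your route is complete. It also makes explicit something the paper leaves implicit. The paper's ansatz $F_x=r^xV$ is justified because $F_1$ has rank one with image in the span of $\ket{L}$. Hence $F_x=F_1^x=r^{x-1}F_1$ with $r=(F_1)_{11}$, and the paper's boundary-condition solution for $V$ is automatically consistent with the characteristic equation.
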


\begin{proof}
We decompose the rotation coin as
\[
U(\theta)=P+Q,
\]
where
\[
P=
\begin{bmatrix}
\cos\theta & \sin\theta\\
0&0
\end{bmatrix},
\qquad
Q=
\begin{bmatrix}
0&0\\
\sin\theta&-\cos\theta
\end{bmatrix}.
\]
Here $P$ corresponds to a move to the left,
and $Q$ corresponds to a move to the right.

A nonempty path returning to the origin can be decomposed uniquely
as a concatenation of first-return excursions.
Here a first-return excursion means a path which starts from the origin,
returns to the origin at its terminal time,
and does not visit the origin at intermediate times.
Let $E_\theta(z,t)$ be the matrix-valued generating function
for one first-return excursion.
Then, including the empty path, we have
\[
\Gamma(z,t)
=
I+E_\theta(z,t)+E_\theta(z,t)^2+\cdots
=
(I-E_\theta(z,t))^{-1}.
\]
Here the inverse is understood as an inverse of formal power series,
since $E_\theta(z,t)$ has no constant term.

Therefore, the algebraic structure of $\Gamma(z,t)$
is determined by the algebraic structure of $E_\theta(z,t)$.

We first consider a positive side excursion.
This is a process which enters from the origin to $+1$,
then moves in the half line $\{1,2,\ldots\}$,
and finally returns from $1$ to $0$.
For $x\ge1$, define the generating function $F_x(z,t)$
with matrix entries as the sum over paths which start from position $x$,
move in the half line, and are finally absorbed at $0$.

According to the convention of this paper that matrix products act
from right to left,
and since each step in the positive side has weight $t$,
for $x\ge2$ we have
\[
F_x(z,t)
=
zt\left(F_{x-1}(z,t)P+F_{x+1}(z,t)Q\right).
\]
At the boundary $x=1$, we have
\[
F_1(z,t)
=
ztF_2(z,t)Q+ztP.
\]
Here $ztP$ is the contribution of the last one step
which is absorbed from position $1$ to position $0$.

We put a characteristic solution in the form
\[
F_x(z,t)=r^xV,
\]
where $V$ is a nonzero $2\times2$ matrix.
Substituting this form into the recurrence for $x\ge2$, we obtain
\[
r^xV
=
zt\left(r^{x-1}VP+r^{x+1}VQ\right).
\]
Dividing by $r^{x-1}$, we get
\[
rV
=
zt(VP+r^2VQ).
\]
Therefore
\[
V(rI-ztP-ztr^2Q)=0.
\]
For a nontrivial solution to exist, we must have
\[
\det(rI-ztP-ztr^2Q)=0.
\]

Substituting the matrices, we have
\[
rI-ztP-ztr^2Q
=
\begin{bmatrix}
r-zt\cos\theta & -zt\sin\theta\\
-ztr^2\sin\theta & r+ztr^2\cos\theta
\end{bmatrix}.
\]
Hence
\[
\begin{aligned}
\det(rI-ztP-ztr^2Q)
&=
(r-zt\cos\theta)(r+ztr^2\cos\theta)
-(zt)^2r^2\sin^2\theta\\
&=
r^2-(zt)r\cos\theta+(zt)r^3\cos\theta
-(zt)^2r^2(\cos^2\theta+\sin^2\theta)\\
&=
r^2-(zt)r\cos\theta+(zt)r^3\cos\theta-(zt)^2r^2.
\end{aligned}
\]
Assuming $r\neq0$ and dividing by $r$, we obtain
\[
r-(zt)\cos\theta+(zt)r^2\cos\theta -(zt)^2r=0.
\]
That is,
\[
(zt\cos\theta)r^2+\left(1-(zt)^2\right)r-(zt\cos\theta)=0.
\]
The discriminant of this quadratic equation is
\[
\begin{aligned}
\left(1-(zt)^2\right)^2
+4(zt)^2\cos^2\theta
&=
1-2z^2t^2+z^4t^4+4z^2t^2\cos^2\theta\\
&=
1+2(2\cos^2\theta-1)z^2t^2+z^4t^4\\
&=
1+2\cos(2\theta)z^2t^2+z^4t^4.
\end{aligned}
\]
Thus, if we put
\[
\Delta(z,t;\theta)
=
z^4t^4+2\cos(2\theta)z^2t^2+1,
\]
then the characteristic roots are given by
\[
r_{\pm}(z,t;\theta)
=
\frac{(zt)^2-1\pm\sqrt{\Delta(z,t;\theta)}}{2zt\cos\theta}.
\]

Let $r_0(z,t;\theta)$ be the branch which is a formal power series
around $z=0$.
Namely, we put
\[
r_0(z,t;\theta)
=
\frac{(zt)^2-1+\sqrt{\Delta(z,t;\theta)}}{2zt\cos\theta}.
\]
Indeed, expanding the square root around $z=0$, we have
\[
\sqrt{\Delta(z,t;\theta)}
=
1+\cos(2\theta)z^2t^2+O(z^4).
\]
Hence
\[
\begin{aligned}
r_0(z,t;\theta)
&=
\frac{z^2t^2-1+1+\cos(2\theta)z^2t^2+O(z^4)}
{2zt\cos\theta}\\
&=
\frac{(1+\cos(2\theta))z^2t^2+O(z^4)}
{2zt\cos\theta}\\
&=
zt\cos\theta+O(z^3).
\end{aligned}
\]
Thus, $r_0$ can be treated as a formal power series
which is regular around $z=0$.

Next, we study the algebraic structure of $F_1(z,t)$
from the boundary condition.
Substituting $F_x=r_0^xV$ into
\[
F_1=ztF_2Q+ztP,
\]
we obtain
\[
r_0V=zt\,r_0^2VQ+ztP.
\]
Therefore,
\[
V(r_0I-ztr_0^2Q)=ztP.
\]
Here
\[
r_0I-ztr_0^2Q
=
\begin{bmatrix}
r_0 & 0\\
-ztr_0^2\sin\theta & r_0+ztr_0^2\cos\theta
\end{bmatrix}.
\]
Thus, the inverse of this matrix is expressed as a rational function
of $r_0$, $z$, $t$, $\sin\theta$, and $\cos\theta$.
Therefore,
\[
V=ztP(r_0I-ztr_0^2Q)^{-1},
\]
and each entry of $V$ is expressed as a rational function of $r_0$.
It follows that each entry of
\[
F_1(z,t)=r_0V
\]
is also expressed as a rational function of $r_0$.

From the above discussion, the absorbing part of a positive side excursion
is an algebraic function expressed by using $r_0$, and hence by using
\[
\sqrt{\Delta(z,t;\theta)}.
\]
Moreover, the whole contribution of a positive side excursion is obtained
by multiplying the contribution of the first one step
which enters from the origin to $+1$.
Under the convention for matrix products in this paper, this contribution is
\[
E_\theta^{(+)}(z,t)=F_1(z,t)\,ztQ.
\]
Here $ztQ$ is the product of $z,t$ and the matrix $Q$.
Therefore, the algebraic structure involving the square root
is the same as that of $F_1(z,t)$.

The negative side excursion is treated in the same way.
On the negative side, the weight $t$ for the positive sojourn time
does not appear.
Hence it is reduced to the same problem on the half line
with $t=1$.
Therefore, the expression for the negative side excursion contains
\[
\sqrt{z^4+2\cos(2\theta)z^2+1}.
\]
More concretely, if the whole contribution of a negative side excursion
is denoted by $E_\theta^{(-)}(z,1)$,
then each entry is obtained from the entries of the above absorbing process
on the half line by putting $t=1$.

Consequently, each entry of the generating function for one excursion
\[
E_\theta(z,t)
=
E_\theta^{(+)}(z,t)+E_\theta^{(-)}(z,1)
\]
is an algebraic function expressed by using
\[
\sqrt{z^4t^4+2\cos(2\theta)z^2t^2+1}
\]
and
\[
\sqrt{z^4+2\cos(2\theta)z^2+1}.
\]
Finally, since
\[
\Gamma(z,t)=(I-E_\theta(z,t))^{-1},
\]
each entry of $\Gamma(z,t)$ is also algebraic in $z$
and is expressed by using the same square roots.
\end{proof}

By Lemma \ref{lem:matrix-gf},
the algebraic structure of the generating function for the rotation coin
$U(\theta)$ is governed by
\[
\Delta(z,t;\theta)
=
z^4t^4+2\cos(2\theta)z^2t^2+1.
\]
The Hadamard coin corresponds to
\[
\theta=\frac{\pi}{4}.
\]
In this case,
\[
\cos(2\theta)=0.
\]
Therefore,
\[
\Delta\left(z,t;\frac{\pi}{4}\right)
=
1+z^4t^4.
\]
This simplification shows that the coefficient structure at time $4m$
becomes especially simple in the case of the Hadamard coin.
However, in order to prove the necessity of the exact uniform distribution,
one needs not only the form of the square root
but also an actual comparison of coefficients.

We next confirm that this simplification of the square root occurs
only in the Hadamard case.

\begin{lemma}
\label{lem:sqrt-reduction}
For a real number $\theta$,
the power series expansion around $z=0$ of
\[
\sqrt{z^4+2\cos(2\theta)z^2+1}
\]
consists only of powers of the form $z^{4m}$
if and only if
\[
\cos(2\theta)=0.
\]
\end{lemma}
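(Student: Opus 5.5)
Write $c=\cos(2\theta)$ and $u=z^2$, so that the function under consideration is $f(u)=\sqrt{1+2cu+u^2}$, taken on the branch with $f(0)=1$. Its expansion in $z$ contains only even powers automatically, so the claim is that the expansion in $u$ contains only even powers of $u$ exactly when $c=0$. I will prove the two directions separately.

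\textbf{Sufficiency.} If $c=0$, then $f(u)=\sqrt{1+u^2}$. The binomial series gives
\[
\sqrt{1+u^2}=\sum_{m\ge0}\binom{1/2}{m}u^{2m}=\sum_{m\ge0}\binom{1/2}{m}z^{4m},
\]
so only powers $z^{4m}$ occur.

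\textbf{Necessity.} The plan is to compute the coefficient of $u$, i.e. of $z^2$. Writing $f(u)=(1+g)^{1/2}$ with $g=2cu+u^2$, we get
\[
f(u)=1+\tfrac12 g-\tfrac18 g^2+O(g^3)=1+cu+\tfrac{1-c^2}{2}u^2+O(u^3).
\]
This agrees with the expansion already used in the proof of Lemma \ref{lem:matrix-gf}, where $\sqrt{\Delta}=1+\cos(2\theta)z^2t^2+O(z^4)$. The coefficient of $z^2$ is therefore exactly $\cos(2\theta)$. If only powers $z^{4m}$ appear, the $z^2$ coefficient must vanish, which forces $\cos(2\theta)=0$.

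There is no real obstacle in this argument. The only points to be careful about are fixing the branch with $f(0)=1$, so that the expansion is the formal power series meant in the statement, and checking that the linear term of $(1+g)^{1/2}$ contributes only through $\tfrac12 g$. That holds because $g^2$ and all higher powers of $g$ are $O(u^2)$.
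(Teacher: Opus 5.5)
Your proof is correct and matches the paper's argument. Both read off the $z^2$ coefficient $\cos(2\theta)$ from the expansion $\sqrt{1+g}=1+\tfrac12 g+O(g^2)$ for necessity, and both use the binomial series of $\sqrt{1+z^4}$ for sufficiency.
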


\begin{proof}
Put
\[
c=2\cos(2\theta).
\]
Then
\[
\sqrt{z^4+2\cos(2\theta)z^2+1}
=
\sqrt{1+cz^2+z^4}.
\]
Expanding the square root around $z=0$, we obtain
\[
\sqrt{1+cz^2+z^4}
=
1+\frac{c}{2}z^2+O(z^4).
\]
Therefore, if $c\ne0$, then the term of degree $2$ in $z$ appears.
In this case, the power series expansion does not consist only of powers
of the form $z^{4m}$.

On the other hand, if $c=0$, that is, if $\cos(2\theta)=0$, then
\[
\sqrt{1+cz^2+z^4}
=
\sqrt{1+z^4}.
\]
Its power series expansion is
\[
\sqrt{1+z^4}
=
1+\frac12z^4-\frac18z^8+\cdots.
\]
Hence only powers of the form $z^{4m}$ appear.
This proves the assertion.
\end{proof}

This lemma shows that only in the Hadamard case
the square root is simplified to
\[
\sqrt{1+z^4}.
\]
However, in order to prove the necessity of the exact uniform distribution,
one needs not only the form of the square root
but also an actual comparison of coefficients.
Therefore, below we show that the Hadamard type is necessary
by comparing coefficients of low degree at time $8$.

\subsection{Four point extraction and the characterization theorem}
\label{subsec:four-point}

We first introduce a basic operation for extracting the terms
whose times are multiples of $4$.

\begin{lemma}
\label{lem35}
Let
\[
W(z,t):=\sum_{n\ge0}\sum_{k=0}^{n}w_n(k)z^nt^k
\]
be the generating function in two variables for the weights before normalization.
Then the four point average
\[
H_W(z,t):=
\frac14\Bigl(W(z,t)+W(iz,t)+W(-z,t)+W(-iz,t)\Bigr)
\]
keeps only the terms with $n\equiv0\pmod4$
and eliminates the terms of all other degrees.
Namely,
\[
H_W(z,t)
=
\sum_{m\ge0}\sum_{k=0}^{4m}w_{4m}(k)z^{4m}t^k.
\]
\end{lemma}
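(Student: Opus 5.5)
The plan is to prove the statement by a direct coefficient computation, using the classical roots-of-unity filter. Since $W(z,t)$ is a formal power series in $z$ with coefficients polynomial in $t$, we can substitute $z\mapsto \omega z$ for $\omega\in\{1,i,-1,-i\}$ term by term. This gives
\[
W(\omega z,t)=\sum_{n\ge0}\sum_{k=0}^{n}w_n(k)\,\omega^n z^n t^k .
\]
The variable $t$ is untouched by the substitution, so the $t$-exponent $k$ stays attached to its coefficient.

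Summing the four substitutions and dividing by $4$, the coefficient of $z^nt^k$ in $H_W(z,t)$ becomes
\[
w_n(k)\cdot\frac14\bigl(1+i^n+(-1)^n+(-i)^n\bigr)
= w_n(k)\cdot\frac14\sum_{j=0}^{3} i^{jn}.
\]
The key step is evaluating this character sum. If $4\mid n$, then $i^n=1$ and every summand equals $1$, so the average is $1$. Otherwise $i^n\neq 1$ is a fourth root of unity, and the geometric sum gives
\[
\sum_{j=0}^{3}(i^n)^j=\frac{1-(i^n)^4}{1-i^n}=\frac{1-i^{4n}}{1-i^n}=0,
\]
since $i^{4n}=1$. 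One can also check the cases $n\equiv1,2,3\pmod4$ directly.

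Substituting back, only the terms with $n=4m$ survive, each with coefficient exactly $w_{4m}(k)$. This yields the displayed formula for $H_W$.

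There is no real obstacle here. The only point needing a remark is that the rearrangement of the four series is legitimate. This holds at the level of formal power series, because each coefficient of $z^n t^k$ receives only finitely many (namely four) contributions. Alternatively, it holds analytically for small $|z|$: since $\|P\|,\|Q\|\le1$, we have $w_n(k)\le \binom{n}{n/2}^2$, so $W$ converges for $|z|<1/4$ and $|t|\le1$.
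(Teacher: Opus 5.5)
Your proposal is correct and follows essentially the same route as the paper: substitute $z\mapsto\omega z$ termwise, reduce the coefficient of $z^nt^k$ to $\frac14\bigl(1+i^n+(-1)^n+(-i)^n\bigr)w_n(k)$, and observe that this equals $w_n(k)$ when $4\mid n$ and $0$ otherwise. Your geometric-sum evaluation of the character sum and your remark on convergence are harmless additions; the paper simply checks the cases directly and works at the level of formal power series.
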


\begin{proof}
We compute coefficient by coefficient as formal power series.
Put $\omega=i$.
Under the substitution $z\mapsto \omega z$, the monomial $z^n$ is sent to
\[
(\omega z)^n=\omega^n z^n=i^nz^n.
\]
Therefore,
\[
\begin{aligned}
H_W(z,t)
&=
\frac14\Bigl(W(z,t)+W(iz,t)+W(-z,t)+W(-iz,t)\Bigr)\\
&=
\frac14
\sum_{n\ge0}\sum_{k=0}^{n}w_n(k)
\Bigl\{z^n+(iz)^n+(-z)^n+(-iz)^n\Bigr\}t^k\\
&=
\frac14
\sum_{n\ge0}\sum_{k=0}^{n}w_n(k)
\bigl(1+i^n+(-1)^n+(-i)^n\bigr)z^nt^k.
\end{aligned}
\]
Here
\[
1+i^n+(-1)^n+(-i)^n
=
\begin{cases}
4 & (n\equiv0\pmod4),\\
0 & (n\not\equiv0\pmod4).
\end{cases}
\]
Thus, only the terms with $n\equiv0\pmod4$ remain in $H_W(z,t)$.
\end{proof}

By Lemma \ref{lem35},
$H_W(z,t)$ is the generating function which extracts
only the information on times which are multiples of $4$
from the weights before normalization.
For each $m\ge0$, put
\[
G_m(t):=\sum_{k=0}^{4m}w_{4m}(k)t^k.
\]
Then
\[
H_W(z,t)=\sum_{m\ge0}G_m(t)z^{4m}.
\]
Moreover, the distribution of the positive sojourn time
conditioned on return to the origin is given by
\[
\mu_{4m}(k)
=
\frac{w_{4m}(k)}{\sum_{j=0}^{4m}w_{4m}(j)}
=
\frac{[t^k]G_m(t)}{G_m(1)}.
\]

Therefore, the exact uniform distribution at time $4m$
can be written in terms of the generating polynomial $G_m(t)$ as
\[
\frac{G_m(t)}{G_m(1)}
=
\frac{t^2+t^4+\cdots+t^{4m-2}}{2m-1}.
\]
This expression is important in order to distinguish
the weights before normalization from the conditional distribution.

In the case of the Hadamard coin,
Konno \cite{Konno2012} proved that, at time $4m$,
\[
\mu_{4m}(2j)=\frac{1}{2m-1}
\qquad
(j=1,2,\ldots,2m-1)
\]
holds, and that
\[
\mu_{4m}(k)=0
\]
for all other $k$.
Below, we show that, in the family of rotation coins,
this exact uniform distribution occurs only for the Hadamard coin.

\begin{lemma}
\label{lem37}
Let the initial state be
\[
\varphi_\ast=
\frac{1}{\sqrt2}
\begin{bmatrix}
1\\ i
\end{bmatrix}.
\]
For the rotation coin
\[
U(\theta)=
\begin{bmatrix}
\cos\theta & \sin\theta\\
\sin\theta & -\cos\theta
\end{bmatrix}
\qquad
\left(0<\theta<\frac{\pi}{2}\right),
\]
if
\[
\mu_8(2)=\mu_8(4)=\mu_8(6)
\]
holds at time $8$, then
\[
\cos(2\theta)=0.
\]
Consequently,
\[
\theta=\frac{\pi}{4}.
\]
\end{lemma}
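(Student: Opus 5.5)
The plan is to compute the three weights $w_8(2),w_8(4),w_8(6)$ explicitly as polynomials in $c=\cos\theta$ and $s=\sin\theta$. Since $\mu_8(2k)$ is $w_8(2k)$ divided by the common positive denominator $\sum_j w_8(j)$, the hypothesis $\mu_8(2)=\mu_8(4)=\mu_8(6)$ is equivalent to $w_8(2)=w_8(4)=w_8(6)$. We then show that these equalities force $2c^2-1=\cos(2\theta)=0$.

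\textbf{Rank-one reduction.} Write $P=\ket{L}\langle p|$ and $Q=\ket{R}\langle q|$, where $\langle p|=(c,\,s)$ and $\langle q|=(s,\,-c)$. A path $X_8\cdots X_1$ with $X_j\in\{P,Q\}$ then gives
\[
X_8\cdots X_1\varphi_\ast=\ket{e_8}\Bigl(\prod_{j=2}^{8}\langle x_j|e_{j-1}\rangle\Bigr)\langle x_1|\varphi_\ast\rangle ,
\]
where $\ket{e_j}\in\{\ket{L},\ket{R}\}$ is the output of $X_j$ and $\langle x_j|\in\{\langle p|,\langle q|\}$ is its row vector. The transition scalars are
\[
\langle p|L\rangle=c,\quad \langle p|R\rangle=s,\quad \langle q|L\rangle=s,\quad \langle q|R\rangle=-c .
\]
So each consecutive pair of steps contributes $c$ for two left moves, $-c$ for two right moves, and $s$ for a turn.

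Next, $\langle p|\varphi_\ast\rangle=e^{i\theta}/\sqrt2$ and $\langle q|\varphi_\ast\rangle=-i\,e^{i\theta}/\sqrt2$. These have equal modulus and differ by the factor $-i$.

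Fix $k$ and the last letter $e\in\{P,Q\}$. Let $A_{e,f}(k)\in\mathbb R[c,s]$ be the sum of the transition products over paths with sojourn $k$, last letter $e$ and first letter $f$. The $\ket{e_8}$-component of $\Gamma_8(k)\varphi_\ast$ is then $\tfrac{e^{i\theta}}{\sqrt2}\bigl(A_{e,P}-iA_{e,Q}\bigr)$. Because the $A$'s are real, the cross terms cancel, and
\[
w_8(k)=\tfrac12\sum_{e,f\in\{P,Q\}}A_{e,f}(k)^2 .
\]
This turns the problem into a purely combinatorial computation with real polynomials.

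\textbf{Enumeration.} Paths of length $8$ returning to $0$ decompose into first-return excursions, as in Lemma \ref{lem:matrix-gf}. A positive (resp.\ negative) excursion of length $2j$ contributes $2j$ (resp.\ $0$) to the sojourn time. Hence sojourn $2$ means exactly one positive excursion of length $2$, sojourn $6$ means positive excursions of total length $6$, and sojourn $4$ means total positive length $4$. I will list these paths grouped by the first and last letters $(f,e)$ and sum the monomials $\pm c^a s^b$, with $a+b=7$, to get each $A_{e,f}(k)$. I will use $s^2=1-c^2$ throughout. The reflection $P\leftrightarrow Q$ turns sojourn $k$ into sojourn $8-k$ and changes only the sign pattern of the $c$-factors, so $w_8(2)$ and $w_8(6)$ should be closely related, possibly identical. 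In that case the real content lies in the single equation $w_8(2)=w_8(4)$.

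\textbf{Final step and main obstacle.} The equation $w_8(2)-w_8(4)=0$ becomes a polynomial equation in $u=c^2\in(0,1)$. I expect it to factor as $(2u-1)\,R(u)$, consistent with Konno's result at $\theta=\pi/4$. The remaining task is to show that $R$ has no root in $(0,1)$, for instance by checking that all its coefficients have one sign after substituting $u\mapsto$ a suitable variable, or by direct sign analysis on $[0,1]$. If $w_8(2)\ne w_8(6)$ in general, the second equation gives an additional constraint, and I would take the gcd of the two polynomials instead.

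I expect the main obstacle to be the error-prone enumeration of the roughly $70$ return paths into the $A_{e,f}(k)$, and then certifying that the cofactor $R$ has no root in $(0,1)$. Once $\cos(2\theta)=0$ is established, the restriction $0<\theta<\pi/2$ gives $\theta=\pi/4$.
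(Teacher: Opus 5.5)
Your approach is essentially the paper's: an explicit time-$8$ computation that reduces the hypothesis to $w_8(2)=w_8(4)$. As you anticipate, $w_8(2)=w_8(6)$ is automatic from the $L\leftrightarrow R$ reflection. Your rank-one bookkeeping $w_8(k)=\tfrac12\sum_{e,f}A_{e,f}(k)^2$ is a cleaner substitute for the paper's matrix recurrence, and carrying it out reproduces the paper's formulas for $w_8(2)$ and $w_8(4)$.

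One adjustment is needed at the end. The difference is $w_8(2)-w_8(4)=c^4s^6(c^2-s^2)^2=u^2(1-u)^3(2u-1)^2$, so $2u-1$ appears squared and your cofactor $R$ still vanishes at $u=\tfrac12$. A constant-sign argument for $R$ on $(0,1)$ would therefore fail. What you need instead is that this nonnegative expression has no zero in $(0,1)$ other than $u=\tfrac12$, which is immediate from the factorization.
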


\begin{proof}
Put $c=\cos\theta$ and $s=\sin\theta$.
Since $0<\theta<\pi/2$, we have $c>0$ and $s>0$.

We compute the weights before normalization at time $8$.
This computation is obtained by classifying all paths returning to the origin
of length $8$ according to the positive sojourn time,
summing all corresponding matrix products,
applying the resulting matrices to the initial state $\varphi_\ast$,
and then taking the squared norm.
The paths returning to the origin at time $8$
consist of four moves to the left and four moves to the right.
Hence there are
\[
\binom{8}{4}=70
\]
such paths in total.

This finite computation can be reproduced by the following recurrence.
Let $\mathcal A_j(x,k)$ be the matrix weight corresponding to time $j$,
position $x$, and positive sojourn time $k$.
The initial condition is
\[
\mathcal A_0(0,0)=I,
\]
and all other $\mathcal A_0(x,k)$ are the zero matrix.
For $\varepsilon\in\{-1,+1\}$, put
\[
\chi(x,\varepsilon)
=
\begin{cases}
1, & x>0,\\
1, & x=0 \text{ and } \varepsilon=+1,\\
0, & \text{otherwise}.
\end{cases}
\]
This function represents whether the interval $[j,j+1)$
is assigned to the positive side.
Since $P$ corresponds to a move to the left
and $Q$ corresponds to a move to the right,
the matrix weight at time $j+1$ is obtained
by summing all contributions which arrive from time $j$.
Namely, for each $y,\ell$, we update by
\[
\mathcal A_{j+1}(y,\ell)
=
\sum_{\substack{x,k\\ y=x-1\\ \ell=k+\chi(x,-1)}}
P\,\mathcal A_j(x,k)
+
\sum_{\substack{x,k\\ y=x+1\\ \ell=k+\chi(x,+1)}}
Q\,\mathcal A_j(x,k).
\]
Then
\[
\Gamma_n(k)=\mathcal A_n(0,k).
\]
The formulas below at time $8$ are obtained by applying this recurrence
for $j=0,\ldots,7$
and simplifying the result by using $c^2+s^2=1$.

As a result, we obtain
\[
\begin{aligned}
w_8(0)=w_8(8)
&=
\frac12
s^2(c^2-s^2)^2
\left(c^4-5c^2s^2+s^4\right)^2,\\
w_8(2)=w_8(6)
&=
c^2s^4
\left(c^8-4c^6s^2+7c^4s^4-4c^2s^6+s^8\right),\\
w_8(4)
&=
c^2s^4
\left(c^8-5c^6s^2+9c^4s^4-5c^2s^6+s^8\right).
\end{aligned}
\]

The return probability to the origin
\[
\sum_{j=0}^{8}w_8(j)
\]
is positive, and $\mu_8(k)$ is obtained by dividing $w_8(k)$
by this common normalization constant.
Therefore, by the assumption, in particular we have
\[
w_8(2)=w_8(4).
\]
From the above expressions,
\[
\begin{aligned}
w_8(2)-w_8(4)
&=
c^2s^4
\Bigl[
c^8-4c^6s^2+7c^4s^4-4c^2s^6+s^8 \\
&\qquad
-\left(c^8-5c^6s^2+9c^4s^4-5c^2s^6+s^8\right)
\Bigr]\\
&=
c^2s^4\left(c^6s^2-2c^4s^4+c^2s^6\right)\\
&=
c^4s^6(c^2-s^2)^2.
\end{aligned}
\]
Thus,
\[
c^4s^6(c^2-s^2)^2=0.
\]
Since $c>0$ and $s>0$, it follows that
\[
c^2=s^2.
\]
That is,
\[
\cos^2\theta=\sin^2\theta.
\]
Hence
\[
\theta=\frac{\pi}{4}.
\]
This is equivalent to
\[
\cos(2\theta)=0.
\]
\end{proof}

\begin{definition}
\label{def:exact-uniform}
Let $m\ge1$.
We say that the distribution of the positive sojourn time
conditioned on return to the origin is exactly uniform at time $4m$
if
\[
\mu_{4m}(2)=\mu_{4m}(4)=\cdots=\mu_{4m}(4m-2)
=
\frac{1}{2m-1}
\]
holds, and if
\[
\mu_{4m}(k)=0
\]
for all other $k$.
\end{definition}

\begin{theorem}
\label{thm31}
Let the initial state be
\[
\varphi_\ast=
\frac{1}{\sqrt2}
\begin{bmatrix}
1\\ i
\end{bmatrix}.
\]
For the rotation coin
\[
U(\theta)=
\begin{bmatrix}
\cos\theta & \sin\theta\\
\sin\theta & -\cos\theta
\end{bmatrix}
\qquad
\left(0<\theta<\frac{\pi}{2}\right),
\]
the following three conditions are equivalent.
\begin{enumerate}
\item[\rm (i)]
The distribution of the positive sojourn time
conditioned on return to the origin is exactly uniform at time $8$.

\item[\rm (ii)]
For every $m\ge2$, the distribution of the positive sojourn time
conditioned on return to the origin is exactly uniform at time $4m$.

\item[\rm (iii)]
\[
\theta=\frac{\pi}{4}.
\]
\end{enumerate}
\end{theorem}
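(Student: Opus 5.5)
The equivalences will follow from the cycle of implications (ii)$\Rightarrow$(i)$\Rightarrow$(iii)$\Rightarrow$(ii).

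\emph{(ii)$\Rightarrow$(i).} This is immediate: take $m=2$ in (ii).

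\emph{(i)$\Rightarrow$(iii).} By Definition \ref{def:exact-uniform} with $m=2$, exact uniformity at time $8$ gives
\[
\mu_8(2)=\mu_8(4)=\mu_8(6)=\tfrac13 .
\]
This is precisely the hypothesis of Lemma \ref{lem37}, so $\cos(2\theta)=0$. Since $0<\theta<\pi/2$, this forces $\theta=\pi/4$. We could additionally remark that the condition $\mu_8(0)=0$ in the definition is consistent with this: the factor $(c^2-s^2)^2$ in $w_8(0)$ vanishes exactly at $\theta=\pi/4$. The argument does not need this, however.

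\emph{(iii)$\Rightarrow$(ii).} At $\theta=\pi/4$ the coin $U(\pi/4)$ is exactly the Hadamard coin, and the initial state $\varphi_\ast=\frac{1}{\sqrt2}{}^t[1,\,i]$ is the one used in Konno's theorem \cite{Konno2012}. The plan is to cite Konno's result as recalled after Lemma \ref{lem35}: for every $m\ge1$,
\[
\mu_{4m}(2j)=\frac{1}{2m-1}\quad (j=1,\dots,2m-1),
\]
and $\mu_{4m}(k)=0$ for all other $k$. Restricting to $m\ge2$ gives (ii).

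Two points need checking before the citation can be used. First, the denominator $\sum_j w_{4m}(j)$ must be positive, so that $\mu_{4m}$ is defined. This follows from Konno's explicit formula, since his return probability is nonzero at times $4m$. Second, Konno's conventions must match ours: the same sojourn convention (intervals at the origin assigned by the next step), the same $P/Q$ left/right assignment, and the same normalization of the Hadamard matrix. Any discrepancy by a global phase or a left diagonal unitary is harmless by Lemmas \ref{lem:global-phase} and \ref{lem:left-diagonal}.

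The main obstacle is this second point, verifying that Konno's setting coincides with ours. If a self-contained argument were wanted instead, I would use Lemma \ref{lem:matrix-gf} with $\Delta=1+z^4t^4$ to compute $\Gamma(z,t)$ explicitly. From it I would extract $\Gamma_{4m}(k)\varphi_\ast$ and show that $\|\Gamma_{4m}(2j)\varphi_\ast\|^2$ is independent of $j$ for $1\le j\le 2m-1$ and vanishes otherwise. This amounts to reproving Konno's theorem, so I would rely on the citation.
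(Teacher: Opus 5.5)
Your proposal is correct and is essentially the paper's own proof: the paper likewise gets (iii)$\Rightarrow$(i),(ii) by citing Konno's Hadamard result, (ii)$\Rightarrow$(i) by taking $m=2$, and (i)$\Rightarrow$(iii) directly from Lemma~\ref{lem37} via $w_8(2)=w_8(4)$. Your added checks are reasonable points that the paper leaves implicit: positivity of the normalizing sum, and compatibility of Konno's conventions up to the transformations of Lemmas~\ref{lem:global-phase} and~\ref{lem:left-diagonal}.
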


\begin{proof}
First, assume (iii).
Then
\[
U(\theta)
=
\frac1{\sqrt2}
\begin{bmatrix}
1&1\\
1&-1
\end{bmatrix},
\]
and this is the Hadamard coin.
Therefore, by Konno's result \cite{Konno2012}
for the Hadamard walk with the initial state considered here,
for every $m\ge1$,
\[
\mu_{4m}(2j)=\frac{1}{2m-1}
\qquad
(j=1,2,\ldots,2m-1)
\]
holds, and
\[
\mu_{4m}(k)=0
\]
holds for all other $k$.
Thus, in particular, (i) and (ii) hold.

Next, assume (ii).
Then, in particular, the exact uniform distribution holds
for $m=2$, that is, at time $8$.
Therefore, (i) holds.

Finally, assume (i).
If the exact uniform distribution holds at time $8$,
then in particular
\[
\mu_8(2)=\mu_8(4)=\mu_8(6).
\]
Hence, by Lemma \ref{lem37}, we have
\[
\cos(2\theta)=0.
\]
Since $0<\theta<\pi/2$, it follows that $\theta=\pi/4$.
Thus, (iii) holds.

Therefore, (i), (ii), and (iii) are equivalent.
\end{proof}

\begin{corollary}
\label{cor:hadamard-type}
Every coin obtained from the Hadamard coin
by multiplication by a global phase and by left multiplication
by a diagonal unitary matrix gives the same return-conditioned
positive sojourn-time distribution as the Hadamard coin.

Conversely, within the rotation-coin family $U(\theta)$,
exact uniformity at every time $4m$ with $m\ge2$
occurs only for $\theta=\pi/4$.
Thus, modulo these distribution-preserving transformations,
the finite-time uniformity phenomenon is specific to coins of Hadamard type.
\end{corollary}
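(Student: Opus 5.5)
The corollary has two parts, and I will prove them separately using the lemmas already established.

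For the first part, let the coin be $U'=e^{i\eta}D_1H$. Here $H=U(\pi/4)$ is the Hadamard coin and $D_1=\mathrm{diag}(e^{i\alpha},e^{i\beta})$. I will apply Lemma~\ref{lem:left-diagonal} and Lemma~\ref{lem:global-phase} in turn.

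Going from $H$ to $D_1H$ multiplies each matrix $\Gamma_n(k)$ by the common phase $e^{i(\alpha+\beta)n/2}$. This is because every returning path has exactly $n/2$ left moves and $n/2$ right moves. Going from $D_1H$ to $e^{i\eta}D_1H$ then multiplies each $\Gamma_n(k)$ by $e^{in\eta}$.

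Composing the two, $\Gamma_n'(k)=e^{in(\eta+(\alpha+\beta)/2)}\Gamma_n(k)$. A unimodular scalar does not change the squared norm, so the weights $w_n(k)$ are unchanged for every $n$ and $k$. The normalization $\sum_j w_n(j)$ is therefore also unchanged, and the conditional distributions $\mu_n(k)$ agree with those of the Hadamard coin at every time.

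I will also check the order of the two operations. Since $e^{i\eta}$ commutes with $D_1$, it does not matter in which order they are applied. The class "global phase times left diagonal unitary times $H$" is exactly the set of coins obtained by iterating these two operations.

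For the converse, I will use Theorem~\ref{thm31}. Suppose $U(\theta)$ with $0<\theta<\pi/2$ gives exact uniformity at every time $4m$ with $m\ge2$. That is condition (ii), so the theorem forces (iii), namely $\theta=\pi/4$. In fact only the case $m=2$ is needed, through Lemma~\ref{lem37}: $w_8(2)=w_8(4)$ forces $c^4s^6(c^2-s^2)^2=0$, hence $c^2=s^2$.

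To conclude, I combine the two parts. Exact uniformity is a property of the distribution, and by the first part that distribution is invariant under the two transformations. So the uniformity phenomenon is constant on each class of coins modulo these transformations. Within the rotation family it occurs only at the Hadamard class. This is precisely the claimed statement "modulo these distribution-preserving transformations".

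There is no real obstacle here, since everything reduces to the earlier lemmas and the theorem. The one point to state explicitly is the scope of the converse: it is claimed only within the family $U(\theta)$, not for arbitrary unitaries.
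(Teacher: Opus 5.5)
Your proposal is correct and follows the paper's proof. The first assertion is the composition of Lemma~\ref{lem:global-phase} and Lemma~\ref{lem:left-diagonal}, where you simply make the combined phase $e^{in(\eta+(\alpha+\beta)/2)}$ explicit, and the converse is the implication (ii)$\Rightarrow$(iii) of Theorem~\ref{thm31}, which, as you note, only needs the case $m=2$ through Lemma~\ref{lem37}.
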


\begin{proof}
The first assertion follows from Lemma \ref{lem:global-phase}
and Lemma \ref{lem:left-diagonal}.
The converse assertion within the rotation-coin family follows from
Theorem \ref{thm31}.
\end{proof}

The results of this section show that the uniform distribution at finite times
for the Hadamard coin is not a mere coincidence
or a numerical phenomenon.
It is characterized by the rigidity condition
\[
\cos(2\theta)=0
\]
on the parameter of the rotation coin.
In other words, the exact uniform distribution at finite time $n=4m$
is an exceptional phenomenon which occurs only for the Hadamard coin
within the family of rotation coins.

\section{Numerical examples}
\label{sec:numerical}

In this section, we illustrate Theorem \ref{thm31}
by the example at time $8$.
The numerical examples and figures in this section
are not used in the proof of the theorem.
They are included in order to visually confirm that
the exact uniform distribution appears only in the case of the Hadamard coin.

We take the initial state to be
\[
\varphi_\ast=
\frac{1}{\sqrt2}
\begin{bmatrix}
1\\ i
\end{bmatrix}.
\]
We also consider the rotation coin
\[
U(\theta)=
\begin{bmatrix}
\cos\theta & \sin\theta\\
\sin\theta & -\cos\theta
\end{bmatrix}.
\]

We first consider the case $\theta=\pi/4$,
which corresponds to the Hadamard coin.
In this case,
\[
P=
\frac{1}{\sqrt2}
\begin{bmatrix}
1&1\\
0&0
\end{bmatrix},
\qquad
Q=
\frac{1}{\sqrt2}
\begin{bmatrix}
0&0\\
1&-1
\end{bmatrix}.
\]
At time $8$, a path returning to the origin consists of
four moves to the left and four moves to the right.
Hence the total number of such paths is
\[
\binom{8}{4}=70.
\]
Classifying these paths according to the positive sojourn time $k$,
and computing the matrix weights $\Gamma_8(k)$
according to the definition in Section \ref{sec:definition},
we obtain
\[
\Gamma_8(0)=O,\qquad
\Gamma_8(8)=O,
\]
and
\[
\Gamma_8(2)
=
\Gamma_8(4)
=
\Gamma_8(6)
=
\frac1{16}
\begin{bmatrix}
1&1\\
-1&1
\end{bmatrix}.
\]
Therefore, for example,
\[
\begin{aligned}
\Gamma_8(2)\varphi_\ast
&=
\frac1{16}
\begin{bmatrix}
1&1\\
-1&1
\end{bmatrix}
\frac1{\sqrt2}
\begin{bmatrix}
1\\ i
\end{bmatrix}
=
\frac1{16\sqrt2}
\begin{bmatrix}
1+i\\
-1+i
\end{bmatrix}.
\end{aligned}
\]
Hence
\[
\begin{aligned}
w_8(2)
&=
\|\Gamma_8(2)\varphi_\ast\|^2\\
&=
\left\|
\frac1{16\sqrt2}
\begin{bmatrix}
1+i\\
-1+i
\end{bmatrix}
\right\|^2\\
&=
\frac1{512}
\left(|1+i|^2+|-1+i|^2\right)
=
\frac1{128}.
\end{aligned}
\]
Similarly,
\[
w_8(4)=w_8(6)=\frac1{128},
\qquad
w_8(0)=w_8(8)=0.
\]
Thus,
\[
\sum_{k=0}^{8}w_8(k)=\frac3{128}.
\]
After normalization, we obtain
\[
\mu_8(2)=\mu_8(4)=\mu_8(6)=\frac13,
\qquad
\mu_8(0)=\mu_8(8)=0.
\]
Therefore, in the case of the Hadamard coin,
the distribution of the positive sojourn time
conditioned on return to the origin
is exactly uniform on $\{2,4,6\}$.

This agrees with the value
\[
\frac{1}{2m-1}=\frac13
\]
in Definition \ref{def:exact-uniform} when $m=2$.
Thus, this computation concretely confirms
the assertion of Theorem \ref{thm31} at time $8$.

We next look at examples of rotation coins
which are not the Hadamard coin.
Using the same definition, we classified the paths
which return to the origin at time $8$,
computed the matrix weights for each $k$,
and then normalized them.

First, when $\theta=\pi/5$,
the conditional distribution at time $8$ is
\[
\begin{array}{c|ccccc}
k & 0 & 2 & 4 & 6 & 8\\
\hline
\mu_8(k)
& 0.237043 & 0.199088 & 0.127738 & 0.199088 & 0.237043
\end{array}
\]
Thus,
\[
\mu_8(2),\quad \mu_8(4),\quad \mu_8(6)
\]
are not equal, and the exact uniform distribution does not hold.

Moreover, when $\theta=\pi/6$, we have
\[
\begin{array}{c|ccccc}
k & 0 & 2 & 4 & 6 & 8\\
\hline
\mu_8(k)
& 0.173010 & 0.259516 & 0.134948 & 0.259516 & 0.173010
\end{array}
\]
In this case also,
\[
\mu_8(2),\quad \mu_8(4),\quad \mu_8(6)
\]
are not equal, and the exact uniform distribution does not hold.

The above numerical examples are obtained by directly enumerating
the paths which return to the origin at time $8$,
according to the definition in Section \ref{sec:definition}.
In the case of the Hadamard coin,
a completely flat distribution appears on $\{2,4,6\}$.
On the other hand, for rotation coins which are not the Hadamard coin,
some mass remains also at $k=0$ and $k=8$,
and moreover
\[
\mu_8(2),\quad \mu_8(4),\quad \mu_8(6)
\]
are not equal.
These direct computations agree with the assertion of
Theorem \ref{thm31},
which states that the exact uniform distribution at time $8$
is characterized by the rigidity condition
$\theta=\pi/4$.

These behaviors are illustrated in Figure \ref{fig:hadamard-time8}.
\begin{figure}[H]
\centering
\begin{minipage}{0.32\linewidth}
	\centering
	\includegraphics[width=\linewidth]{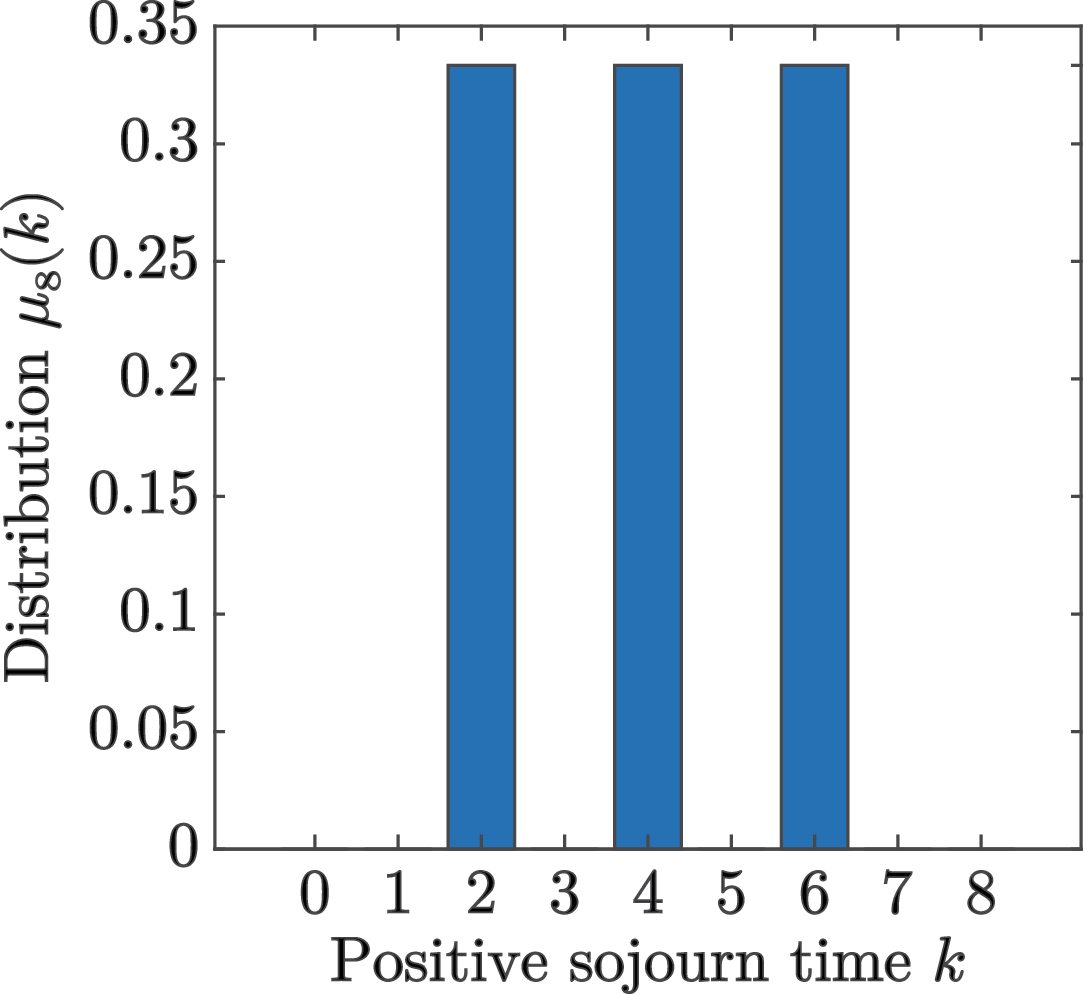}

	{\small (a) $\theta=\pi/4$}
\end{minipage}\hfill
\begin{minipage}{0.32\linewidth}
	\centering
	\includegraphics[width=\linewidth]{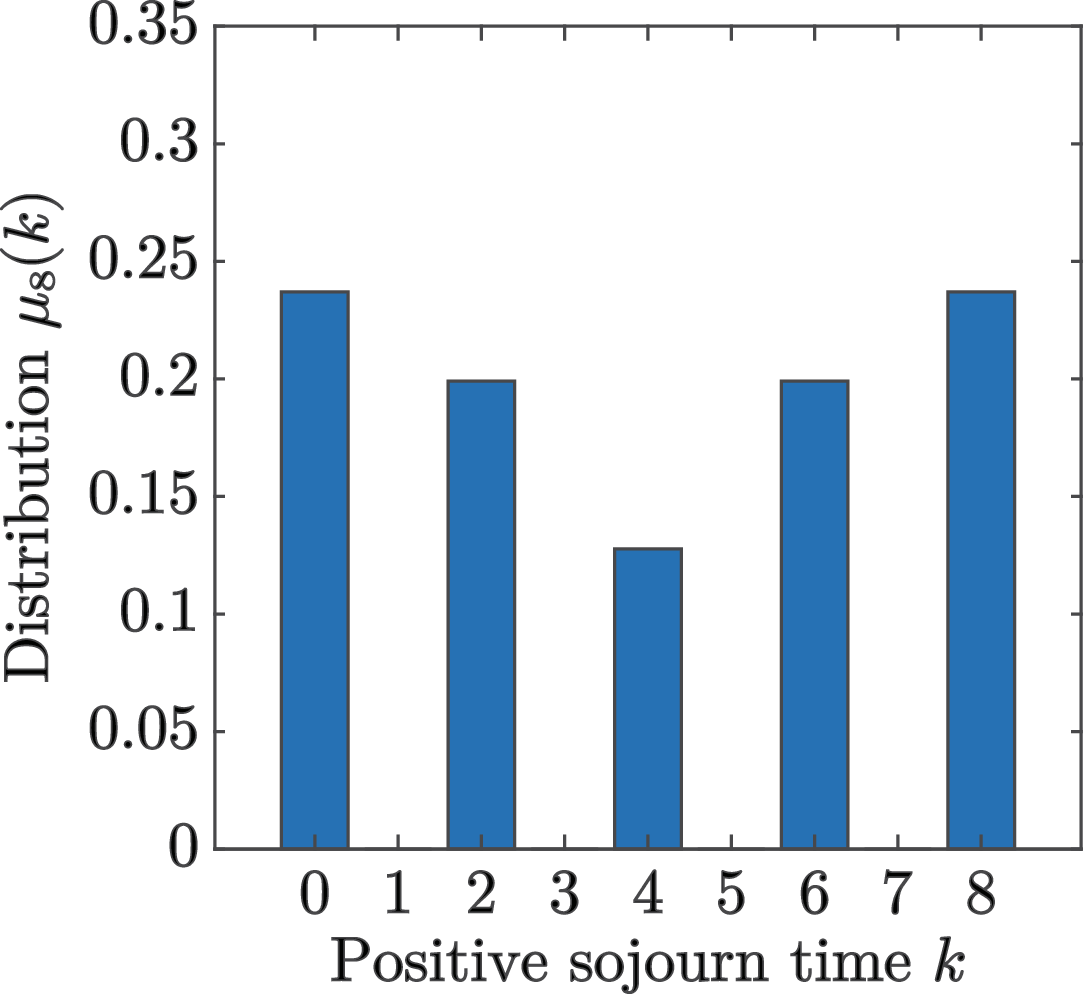}

	{\small (b) $\theta=\pi/5$}
\end{minipage}\hfill
\begin{minipage}{0.32\linewidth}
	\centering
	\includegraphics[width=\linewidth]{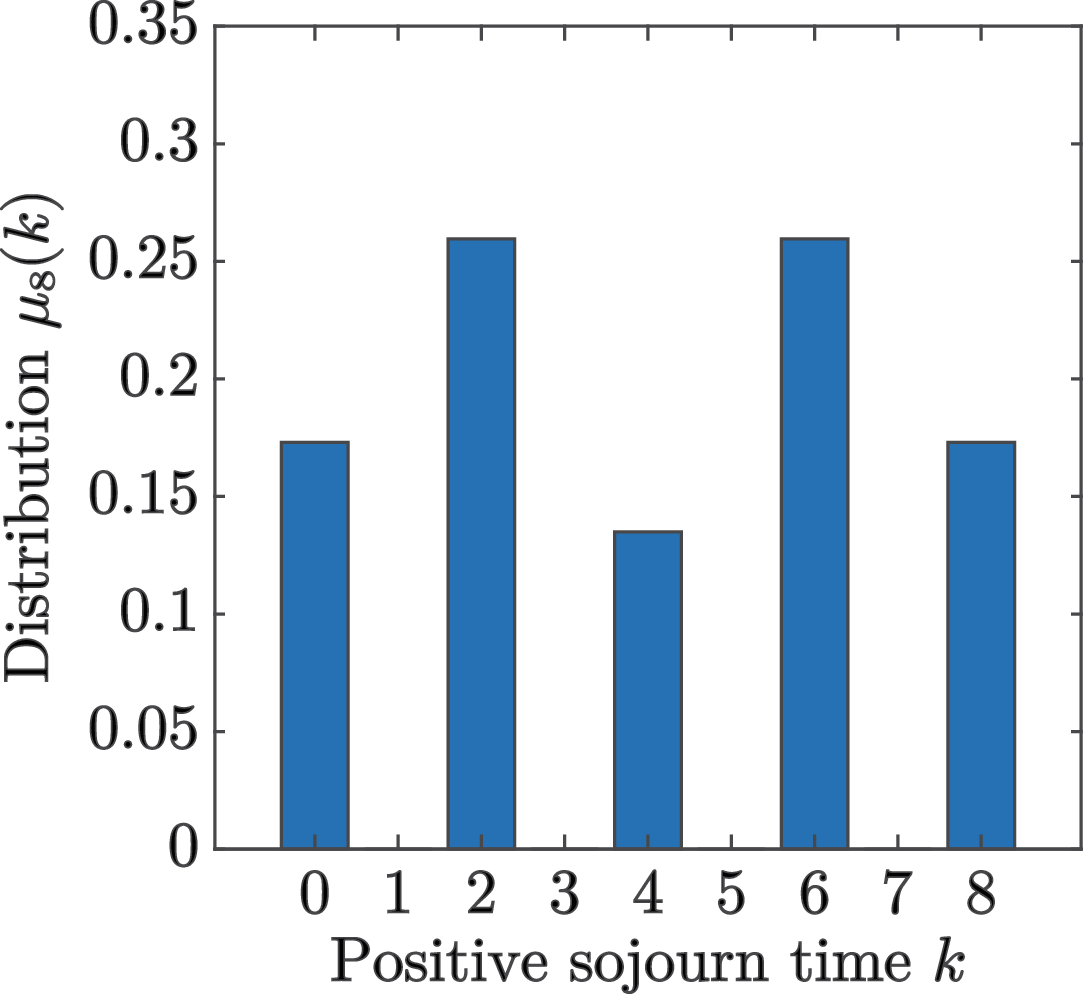}

	{\small (c) $\theta=\pi/6$}
\end{minipage}
\caption{The distribution of the positive sojourn time
conditioned on return to the origin at time $8$
for the rotation coin $U(\theta)$
with $\theta=\pi/4,\pi/5,\pi/6$.
When $\theta=\pi/4$, the distribution is exactly uniform
on $k=2,4,6$, and it is $0$ at $k=0,8$.
On the other hand, when $\theta=\pi/5$ or $\theta=\pi/6$,
this uniformity breaks down.}
\label{fig:hadamard-time8}
\end{figure}

\section{Conclusion and future problems}

In this paper, we studied the distribution of the positive sojourn time
conditioned on return to the origin
for a quantum walk on the line with two internal states.
We investigated how the exact uniform distribution at finite times,
which was shown by Konno \cite{Konno2012} in the case of the Hadamard coin,
is characterized in the family of rotation coins.

As the main result, for the rotation coin
\[
U(\theta)=
\begin{bmatrix}
\cos\theta & \sin\theta\\
\sin\theta & -\cos\theta
\end{bmatrix}
\qquad
\left(0<\theta<\frac{\pi}{2}\right)
\]
and the initial state
\[
\varphi_\ast=
\frac{1}{\sqrt2}
\begin{bmatrix}
1\\ i
\end{bmatrix},
\]
we proved that the following three conditions are equivalent:
the exact uniformity at time $8$,
the exact uniformity at every time which is a multiple of $4$
and is at least $8$,
and $\theta=\pi/4$.
Therefore, in the family of rotation coins,
the exact uniformity of the positive sojourn time distribution
conditioned on return to the origin
occurs only for the Hadamard coin.
In particular, it is not necessary to check all times;
the uniformity at time $8$ already forces the coin
to be the Hadamard coin.

In the proof, we introduced a generating function with matrix entries
for paths returning to the origin,
and studied its algebraic structure.
In the case of the Hadamard coin,
the square root appearing in the generating function
is simplified in a special way.
However, the necessity was proved by comparing coefficients
of low degree at time $8$.
This shows that the exact uniform distribution at finite times
is not a mere numerical phenomenon,
but a rigidity phenomenon coming from the structure of the coin matrix.

As a future problem, it is natural to extend the result
to general unitary coins with two internal states.
In this paper, we treated the family of rotation coins.
For a general coin given by a $2\times 2$ unitary matrix, it would be natural to investigate
whether coins of Hadamard type can be characterized in a similar way,
after taking account of the equivalences given by
the global phase, diagonal unitary transformations,
and changes of basis.

In this paper, we fixed the initial state $\varphi_\ast$.
It is also an important problem to study how the exact uniformity changes
when the initial state is varied,
and whether there exists a condition which gives the uniform distribution
in a form independent of the initial state.
By developing a classification for pairs of a coin matrix and an initial state,
we expect to obtain a clearer understanding of the essence
of the uniform distribution phenomenon for the Hadamard coin.

\section*{Conflict of interest}

The authors declare that they have no conflict of interest.

\section*{Data availability}

Data sharing is not applicable to this article as no datasets were generated or analysed during the current study.



\begin{thebibliography}{99}

\bibitem{Ambainis2001}
A.~Ambainis, E.~Bach, A.~Nayak, A.~Viswanath, and J.~Watrous,
\newblock One-dimensional quantum walks,
\newblock in {\em Proceedings of the 33rd Annual ACM Symposium on Theory of Computing (STOC)},
2001, pp.~37--49.

\bibitem{Konno2002}
N.~Konno,
\newblock Quantum random walks in one dimension,
\newblock {\em Quantum Information Processing} \textbf{1} (2002), 345--354.

\bibitem{Kempe2003}
J.~Kempe,
\newblock Quantum random walks: An introductory overview,
\newblock {\em Contemporary Physics} \textbf{44} (2003), 307--327.

\bibitem{VenegasAndraca2012}
S.~E.~Venegas-Andraca, 
\newblock Quantum walks: a comprehensive review,
\newblock {\em Quantum Information Processing} \textbf{11} (2012), 1015.

\bibitem{Konno2005}
N.~Konno,
\newblock A new type of limit theorems for the one-dimensional quantum random walk,
\newblock {\em Journal of the Mathematical Society of Japan} \textbf{57} (2005), 1179--1195.

\bibitem{Konno2012}
N.~Konno,
\newblock Sojourn times of the Hadamard walk in one dimension,
\newblock {\em Quantum Information Processing} \textbf{11} (2012), 465--480.

\bibitem{Cai2023}
S.~Cai, Q.~Huang, Y.~Ye, Y.~Wen, and Y.~Lin,
\newblock The sojourn times of one-dimensional discrete-time quantum walks,
\newblock {\em Laser Physics Letters} \textbf{20} (2023), 095210.

\end{thebibliography}
\end{document}